\newtheorem{definition}{Definition}[section]
\newtheorem{theorem}[definition]{Theorem}
\newtheorem{lemma}[definition]{Lemma}
\newtheorem{proposition}[definition]{Proposition}
\theoremstyle{definition}
\newtheorem{remark}[definition]{Remark}
\newtheorem{defn}[definition]{Definition}
\newtheorem{example}[definition]{Example}
\newcommand{\C}{\mathbb{C}}
\newcommand\tr{ \operatorname{Tr} }
\title[Nullspaces of Entanglement Breaking Channels and Applications]{Nullspaces of Entanglement Breaking Channels and Applications}
\begin{document}

\author[D.~W. Kribs, J.~Levick, K.~Olfert, R. Pereira, M.~Rahaman]{David~W.~Kribs$^{1,2}$, Jeremy Levick$^{1,2}$, Katrina Olfert$^{1}$, Rajesh Pereira$^{1}$, Mizanur Rahaman$^{3}$}

\address{$^1$Department of Mathematics \& Statistics, University of Guelph, Guelph, ON, Canada N1G 2W1}
\address{$^2$Institute for Quantum Computing, University of Waterloo, Waterloo, ON, Canada N2L 3G1}
%\address{$^2$Department of Mathematics \& Applied Mathematics, University of Cape Town, Cape Town 7700, South Africa}
\address{$^3$Department of Mathematics, BITS Pilani K. K Birla Goa Campus,  Goa 403726 India}

\subjclass[2010]{15B51, 81P40, 81P45, 81P94, 94A40}

\keywords{quantum entanglement, entanglement breaking channel, mixed unitary channel, completely positive map, private quantum channel, private algebra, nullspace.}

%%%%%%%%%%%%%%%%%%% title and abstract %%%%%%%%%%%%%%%%%%%%%%%%%%%%%%

\begin{abstract}
We investigate the nullspace structures of entanglement breaking channels, and related applications. We show that every operator space of trace zero matrices is the nullspace of an entanglement breaking channel. We derive a test for mixed unitarity of quantum channels based on complementary channel behaviour and entanglement breaking channel nullspaces. We identify conditions that guarantee the existence of private algebras for certain classes of entanglement breaking channels.
\end{abstract}
%%%%%%%%%%%%%%%%%%%%%%%

\maketitle

\section{Introduction}

Entanglement breaking channels are a fundamental class of physical maps in quantum information theory. Many important and well-studied examples of channels turn out to be entanglement breaking, and this class of maps has arisen in numerous areas of the subject, including a key role in channel capacity investigations that have yielded surprising information theoretic results for quantum channels. We point the reader to \cite{holevo1998coding,horodecki2003entanglement} and forward references for an entrance into the extensive literature on these channels.

In this paper, we investigate the nullspace structures of entanglement breaking channels and we develop a pair of applications to different areas of quantum information. We first show that every self-adjoint operator space of trace zero matrices is the nullspace of such a channel. Building on this, and taking motivation from quantum privacy, we derive a test for mixed unitarity of quantum channels \cite{audenaert2008random,gurvits2003classical,ioannou2006computational,girard2020mixed,LeeWatrous2020mixed} based on entanglement breaking channel nullspaces and complementary channel \cite{holevo2012quantum,holevo2007complementary,horodecki2003entanglement} behaviour. Starting from a connection with channel nullspaces, we also identify conditions that guarantee the existence of private algebras \cite{ambainis,boykin,bartlett2,bartlett1,church,kks,jochym1,levick2016private,cklt} for certain classes of entanglement breaking channels based on an analysis of multiplicative domains \cite{choi1,c-j-k,johnston,kribs2018quantum,miza,rahaman2018eventually} for the channels.

This paper is organized as follows. The next section includes preliminary material. In Section~3 we give the operator space nullspace construction. Section~4 includes the derivation of the mixed unitary test. Then we present the identification and construction of private algebras in Section~5.

\section{Preliminaries}

Quantum channels are central objects of study in quantum information \cite{nielsen,holevo2012quantum} and are given mathematically by completely positive and trace preserving maps on (in the finite-dimensional case) the set of complex $n\times n$ matrices $M_n (\mathbb{C})$. Every channel $\Phi : M_n (\mathbb{C}) \rightarrow M_n (\mathbb{C})$ can be represented in the operator-sum form by a set of operators $V_i \in M_n (\mathbb{C})$, such that $\Phi(\rho) = \sum_i V_i \rho V_i^*$ and with the trace-preservation condition $\sum_i V^*_i V_i = I$ satisfied where $I$ is the identity matrix. The dual map $\Phi^\dagger$ on $M_n(\mathbb{C})$ will also arise in our analysis, which is the completely positive (and unital when $\Phi$ is trace preserving) map given by $\Phi^\dagger (X) = \sum_i V_i^* X V_i$.

When convenient we will view $M_n (\mathbb{C})$ as the matrix representations of the set of operators acting on $n$-dimensional Hilbert space $\mathcal H = \mathbb{C}^n$, represented in the standard orthonormal basis $\{ e_1,\cdots, e_n  \}$. Outer products will be written as rank one operators $vw^*$ for $v,w\in \mathcal H$, defined by $(vw^*)(u) = (w^* u) v$, where $w^* u$ is the inner product of $w$ with $u$. Note the implication that for us, inner products are antilinear in their first argument, not the second. Additionally we will use the default notation $\rho$ for density operators or matrices; that is, positive operators with trace equal to one. We will also use the notation $M_n(\mathbb{C})_0$ to denote the set of trace zero $n \times n$ complex matrices.

\subsection{Entanglement Breaking Channels}

An important class of channels are those that break all entanglement when acting on a composite system with the identity channel of the same size, $\Phi\otimes \mathrm{id}$ \cite{holevo1998coding,horodecki2003entanglement}. There are numerous equivalent characterizations of entanglement breaking channels, including a physically motivated description as the composition of quantum-classical and classical-quantum channels in the same orthonormal basis.  The Holevo form for such channels \cite{holevo1998coding} is given as follows.

\begin{defn}
A quantum channel $\Phi : M_n(\C) \rightarrow M_n(\C)$ is entanglement breaking if it can be written as:
\begin{equation}\label{holevo}
\Phi(\rho) = \sum_{k=1}^d \tr(F_k\rho) R_k,
\end{equation}
with the $\{F_k\}$ forming a positive-operator valued measure ($F_k\geq 0$ and $\sum_k F_k = I$) and each $R_k$  a density operator. We shall make the further assumption that none of the $F_k$ are zero, which has always been followed in practice.
\end{defn}

We will also make use of the characterization of entanglement breaking channels as precisely the channels with an operator-sum representation comprised of rank one Kraus operators. That is, $\Phi$ is entanglement breaking if and only if there are rank one operators $\{ v_i w_i^* \}_{i=1}^d$ such that
\begin{equation}\label{ebrankone}
\Phi(\rho) = \sum_{i=1}^d v_i w_i^* \, \rho \, w_i v_i^*.
\end{equation}
Without loss of generality, we will assume throughout that $\| v_i \| =1$ for all $i$, and hence trace preservation gives the constraint: $\sum_i w_i w_i^* =I$. To avoid degeneracy we also assume each $w_i \neq 0$.

\subsection{Complementary Channels}

The following notion will be used in two of our sections below.

\begin{defn} Let $\Phi : M_n(\C) \rightarrow M_n(\C)$ be a quantum channel with a minimal set of Kraus operators $\{V_i\}_{i=1}^d$. The canonical complement of $\Phi$ is the channel $\Phi^C: M_n(\C) \rightarrow M_d(\C)$ defined by
\begin{equation}\label{complement}
\Phi^C(\rho) = \sum_{i,j=1}^d \tr (V_j^*V_i \rho)E_{ij},
\end{equation}
with $E_{ij} = e_ie_j^*$ for $1 \leq i,j \leq n$.

A complementary channel for $\Phi$ is any isometric adjunction of the canonical complement; that is, $\Psi$ is a complementary channel for $\Phi$ if and only if there exists an isometry $W$ such that
$$\Psi(\rho) = W \Phi^C(\rho) W^*.$$
\end{defn}

Complementary channels arise from the Stinespring representation \cite{stinespring1955positive} of a channel. The freedom to conjugate by an isometry comes from the inherent freedom to choose a Stinespring representation; alternatively, as any set of Kraus operators $\{\widetilde{V_i}\}_{i=1}^r$ for $\Phi$ is related to the canonical minimal choice by $\widetilde{V_i} = \sum_{j=1}^d w_{ij}V_j$ for some isometry $W = (w_{ij})$, we see that adjunction by an isometry corresponds simply to picking a different set of Kraus operators for $\Phi$.

For more background on complementary channels see \cite{holevo2012quantum,holevo2007complementary,horodecki2003entanglement} and forward references. Though not directly relevant to our analysis, it is worth noting that the complementary channel of an entanglement breaking channel is a Schur product channel \cite{paulsen2002completely,holevo2012quantum}, which have also been recently explored \cite{LKP} in the quantum privacy context to which we now turn.

\subsection{Private Subspaces and Algebras, and Channel Nullspaces}

One motivation for considering nullspaces of quantum channels comes from quantum privacy.

We first recall the definition of a private subspace: given a channel $\Phi$ on $\mathcal H$ and a subspace $\mathcal C$, we say $\mathcal C$ is {\it private for $\Phi$} if there is a density operator $\rho_0$ such that $\Phi(\rho) = \rho_0$ for all $\rho$ supported on $\mathcal C$; that is, for all $\rho$ on $\mathcal H$ with $\rho = P_{\mathcal C} \rho P_{\mathcal C}$ and where $P_{\mathcal C}$ is the projection onto $\mathcal C$.
We can also view the algebra of operators on $\mathcal H$ supported on $\mathcal C$ as being privatized in this case, where that algebra is unitarily equivalent to $M_k(\mathbb{C})$ if $\dim \mathcal C = k$ (and it encodes $\log_2 k$ qubits in that case).

In the same vein, we can consider privatization of more general operator algebras $\mathcal A$ on $\mathcal H$, say unitarily equivalent to some $I_m \otimes M_k(\mathbb{C})$, where $I_m$ is the $m \times m$ identity matrix; namely, the existence of some density operator $\rho_0$ such that $\Phi(A) = \tr(A) \rho_0$ for all $A\in \mathcal A$. Such algebras are exactly the simple finite-dimensional C$^*$-algebras \cite{davidson}, and when $m > 1$ they are used to encode `subsystem codes' (see \cite{kribs2005unified,kribs2005operator,poulin2005stabilizer,shabani2005theory,bacon2006operator,aly2006subsystem,klappenecker2008clifford} and forward references). More generally, orthogonal direct sums of such algebras (i.e., general finite-dimensional C$^*$-algebras), what we will call $\ast$-algebras, are used to describe hybrid classical-quantum information encodings (see \cite{kuperberg2003capacity,beny2007generalization,beny2007quantum} and forward references) and we can similarly consider channel privatization of such algebras. Even more generally one can consider privatizing operator spaces, though our constructions in the final section focus on algebras due to the physical motivations discussed above. 

Originally introduced as the quantum analogue of the classical one-time pad and called private quantum channels \cite{ambainis,boykin}, investigations of private algebras and related notions have continued and expanded over the past several years; see for instance  \cite{bartlett2,bartlett1,church,kks,jochym1,levick2016private,cklt}. The following observation connects such investigations with channel nullspace analyses.

\begin{proposition}\label{private}
Let $\Phi: M_n(\mathbb{C}) \rightarrow M_n(\mathbb{C})$ be a channel and let $\mathcal A \subseteq M_n(\mathbb{C})$ be a $\ast$-subalgebra. Then $\mathcal A$ is private for $\Phi$ if and only if the set of trace zero operators of $\mathcal A$ are contained inside the nullspace of $\Phi$; that is, $\mathcal A \cap M_n(\mathbb{C})_0 \subseteq \mathrm{nullspace}(\Phi)$.
\end{proposition}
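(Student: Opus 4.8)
The plan is to prove both directions directly from the definitions, using linearity of $\Phi$ and the fact that a finite-dimensional $\ast$-algebra $\mathcal A$ decomposes its identity-containing part cleanly. For the forward direction, suppose $\mathcal A$ is private for $\Phi$, so there is a density operator $\rho_0$ with $\Phi(A) = \tr(A)\rho_0$ for all $A \in \mathcal A$. If $A \in \mathcal A \cap M_n(\mathbb C)_0$, then $\tr(A) = 0$, so $\Phi(A) = 0\cdot\rho_0 = 0$, i.e.\ $A \in \mathrm{nullspace}(\Phi)$. This direction is essentially immediate.

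For the converse, assume $\mathcal A \cap M_n(\mathbb C)_0 \subseteq \mathrm{nullspace}(\Phi)$. The key point is that every $A \in \mathcal A$ can be written as $A = (A - \tfrac{\tr(A)}{\tr(P)} P) + \tfrac{\tr(A)}{\tr(P)} P$, where $P$ is the unit of $\mathcal A$ (the sum of the identities of its simple summands); note $P \in \mathcal A$ and the first summand lies in $\mathcal A \cap M_n(\mathbb C)_0$, so $\Phi$ kills it. Hence $\Phi(A) = \tfrac{\tr(A)}{\tr(P)}\Phi(P)$ for all $A \in \mathcal A$. Setting $\rho_0 := \tfrac{1}{\tr(P)}\Phi(P)$, which is positive since $\Phi$ is completely positive and $P \geq 0$, we get $\Phi(A) = \tr(A)\rho_0$ for all $A \in \mathcal A$. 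It remains to check $\rho_0$ is a genuine density operator: $\tr(\rho_0) = \tfrac{1}{\tr(P)}\tr(\Phi(P)) = \tfrac{1}{\tr(P)}\tr(P) = 1$ by trace preservation, so $\rho_0$ has trace one and is positive, as required. Taking $\rho = P_{\mathcal C}\rho P_{\mathcal C}$ type states spanned inside $\mathcal A$ then gives privacy in the stated sense.

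The only mild subtlety — and the thing I would state carefully rather than the main obstacle — is the treatment of the unit: if one insists $\mathcal A$ contains $I$ (a unital $\ast$-subalgebra) then $P = I$ and $\tr(P) = n$, but the proposition as phrased allows general $\ast$-subalgebras, so one should either assume $\mathcal A$ is unital or, more generally, work with the unit $P$ of $\mathcal A$ as above and note that $\tr(P) \neq 0$ since $P$ is a nonzero projection. There is no real analytic difficulty here; the whole argument is a short linear-algebra computation once the decomposition $A = (A - cP) + cP$ with $c = \tr(A)/\tr(P)$ is in hand, and the role of complete positivity is solely to guarantee $\rho_0 \geq 0$.
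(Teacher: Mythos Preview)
Your proof is correct. Both directions are sound, and your treatment of the converse via the decomposition $A = (A - \tfrac{\tr(A)}{\tr(P)}P) + \tfrac{\tr(A)}{\tr(P)}P$ using the unit $P$ of $\mathcal A$ is clean and handles the possibly nonunital case directly.

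The paper takes a slightly different route: it reduces to the private-subspace case $\mathcal A \cong M_k(\mathbb C)$ (deferring the general algebra to ``similar'') and works with density operators rather than with the algebra's unit. For the forward direction, instead of invoking the formula $\Phi(A) = \tr(A)\rho_0$ directly on trace-zero $A$, it writes a Hermitian trace-zero $H$ as $\lambda(\rho_1 - \rho_2)$ for density operators $\rho_1,\rho_2$ supported on $\mathcal C$ and applies the subspace privacy condition $\Phi(\rho_i) = \rho_0$. For the converse, it observes that $\rho_1 - \rho_2$ is trace-zero for any two density operators supported on $\mathcal C$, so the nullspace hypothesis forces $\Phi(\rho_1) = \Phi(\rho_2)$. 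Your argument is more algebraic and explicitly exhibits $\rho_0 = \tfrac{1}{\tr(P)}\Phi(P)$, while the paper's is phrased entirely in terms of states; both are short linear-algebra computations with no real analytic content.
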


\begin{proof}
We prove this result for the private subspace case, so $\mathcal A = M_k(\mathbb{C})$ with ($k$-dimensional) support subspace $\mathcal C$, the general algebra case is similar.
Let $\mathcal N = \mathrm{nullspace}(\Phi)$ and note that by considering the real-imaginary decomposition of an operator, one sees that $\mathcal A_0 := \mathcal A \cap M_n(\mathbb{C})_0$ being contained in $\mathcal N$ is equivalent to showing the Hermitian trace zero operators inside $\mathcal A_0$ belong to $\mathcal N$.

So suppose $\mathcal C$ is private for $\Phi$. Given a trace zero Hermitian operator $H$ supported on $\mathcal C$, we can write it in the standard way as a difference of positive operators supported on $\mathcal C$: $H = \lambda_1 \rho_1 - \lambda_2 \rho_2$ where $\rho_i$ are density operators and $\lambda_i$ are real scalars. But actually $\lambda:= \lambda_1=\lambda_2$ as $\tr(H) = 0$. Hence, $\Phi(H) = \lambda(\Phi(\rho_1) - \Phi(\rho_2)) =0$ and $H \in \mathcal N$.

On the other hand, given any two density operators $\rho_1$, $\rho_2$ supported on $\mathcal C$, their difference is a trace zero operator supported on $\mathcal C$. Thus, if $\mathcal A_0$ is contained in $\mathcal N$, we have $0 = \Phi(\rho_1 - \rho_2) = \Phi(\rho_1) - \Phi(\rho_2)$ and it follows that $\mathcal C$ is a private subspace for $\Phi$.
\end{proof}

\section{Channel Annihilation of Trace Zero Operator Spaces}

In this section, we give a construction of entanglement breaking channels that annihilate prescribed operator spaces and discuss a pair of examples.

Note first that the nullspace $\{ X\in M_n(\mathbb{C}) : \Phi(X)=0  \}$ of any quantum channel $\Phi : M_n(\mathbb{C}) \rightarrow M_n(\mathbb{C})$, in particular as a trace-preserving map, is contained inside the operator subspace of trace zero matrices, and that it is a self-adjoint subspace as a channel is a positive map.

\begin{proposition}
Let $\mathcal N$ be a self-adjoint subspace of the trace zero matrices inside $M_n(\mathbb{C})$. Then there is an entanglement breaking channel $\Phi$ : $M_n(\mathbb{C}) \rightarrow M_n(\mathbb{C})$ such that $\mathrm{nullspace}(\Phi) = \mathcal N$.
\end{proposition}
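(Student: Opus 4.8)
The plan is to realize $\mathcal N$ as the Hilbert--Schmidt orthogonal complement of the linear span of a suitable POVM, and then to assemble an entanglement breaking channel in Holevo form from that POVM together with a linearly independent family of density operators. Write $\langle A,B\rangle = \tr(A^*B)$ for the Hilbert--Schmidt inner product on $M_n(\mathbb{C})$ and set $\mathcal M := \mathcal N^{\perp}$. Since the orthogonal complement of a self-adjoint subspace is again self-adjoint, $\mathcal M$ is self-adjoint; and since every $X\in\mathcal N$ satisfies $\langle I,X\rangle = \tr(X) = 0$, we have $I\in\mathcal M$. Let $m = \dim\mathcal M = n^2 - \dim\mathcal N$, so that $1\le m\le n^2$.

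The key step is to produce a POVM $\{F_1,\dots,F_m\}$ with each $F_k$ positive definite (in particular nonzero), $\sum_{k=1}^m F_k = I$, and $\mathrm{span}\{F_1,\dots,F_m\} = \mathcal M$. To do this, start from a Hermitian basis $\{H_1,\dots,H_m\}$ of $\mathcal M$ with $H_1 = I$, which exists because $\mathcal M$ is self-adjoint and contains $I$. For $k\ge 2$ put $G_k := H_k + c_k I$ with $c_k > \|H_k\|$, so that $G_k > 0$ while $\{I,G_2,\dots,G_m\}$ is still a basis of $\mathcal M$. Then pick $N$ large enough that $\sum_{k=2}^m \tfrac1N G_k < I$, set $F_k := \tfrac1N G_k$ for $k\ge 2$, and let $F_1 := I - \sum_{k=2}^m F_k$; for $N$ large this $F_1$ is positive definite, and it lies in $\mathcal M$ since $I$ and the $F_k$ ($k\ge 2$) do. The family $\{F_k\}_{k=1}^m$ is then the desired POVM: its span contains $I$ (as $F_1+\sum_{k\ge 2}F_k = I$) and $F_2,\dots,F_m$, hence equals $\mathrm{span}\{I,G_2,\dots,G_m\} = \mathcal M$, while every $F_k$ lies in $\mathcal M$.

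Next, since the density operators span $M_n(\mathbb{C})$ (for instance the rank one projections do) and $m\le n^2$, we may choose linearly independent density operators $R_1,\dots,R_m$. Define $\Phi(\rho) := \sum_{k=1}^m \tr(F_k\rho)\,R_k$. This is in Holevo form with all $F_k$ nonzero, hence completely positive; it is trace preserving because $\sum_k F_k = I$ and each $\tr(R_k) = 1$; so $\Phi$ is an entanglement breaking channel. Finally, for $X\in M_n(\mathbb{C})$ we have $\Phi(X) = 0$ if and only if $\sum_k \tr(F_k X)R_k = 0$, which by linear independence of the $R_k$ holds if and only if $\tr(F_k X) = \langle F_k,X\rangle = 0$ for every $k$ (using $F_k^* = F_k$), i.e.\ if and only if $X\in\mathcal M^{\perp} = (\mathcal N^{\perp})^{\perp} = \mathcal N$. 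Hence $\mathrm{nullspace}(\Phi) = \mathcal N$.

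I expect the only genuine obstacle to be the middle step --- exhibiting a POVM whose linear span is \emph{exactly} the prescribed self-adjoint space $\mathcal M$, with no zero effects --- and the shift-and-rescale trick above is what handles it. Note that the degenerate case $\mathcal N = \{0\}$ is covered automatically: then $m = n^2$, the $F_k$ form an informationally complete POVM, the $R_k$ form a basis of $M_n(\mathbb{C})$ consisting of density operators, and $\Phi$ has trivial nullspace as required. The remaining verifications (complete positivity, trace preservation, the nullspace computation) are routine.
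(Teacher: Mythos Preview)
Your proof is correct and follows essentially the same approach as the paper: build a POVM $\{F_k\}$ whose linear span is exactly $\mathcal N^\perp$, pair it with linearly independent density operators $\{R_k\}$, form the Holevo channel, and read off the nullspace from the span of the $F_k$. Your shift-and-rescale construction of the POVM is a minor variant of the paper's (which works with an orthonormal basis of $\mathcal N^\perp\cap M_n(\mathbb C)_0$, appends $H_{m+1}=-\sum_k H_k$, and shifts by minimal eigenvalues), and has the slight advantage of handling the extreme case $\mathcal N = M_n(\mathbb C)_0$ uniformly rather than separately.
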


\begin{proof}
In the case that $\mathcal N = M_n(\mathbb{C})_0$, we may use the so-called completely depolarizing channel $\Phi_{\mathrm{CD}}(A) = \frac{\tr(A)}{n} I$, which is evidently in the Holevo form with $F_1= I$, $R_1=\frac1n I$. (It is also implemented with Kraus operators given by any complete set of rank one matrix units $E_{ij} = e_ie_j^*$, where $\{e_i\}$ form an orthonormal basis for $\mathbb{C}^n$.)

Thus, for the rest of the proof assume $\mathcal N \subsetneq M_n(\mathbb{C})_0$, and let $\{ H_k \}_{k=1}^m$ be an orthonormal basis (in the trace inner product $\langle A,B\rangle = \tr(B^* A)$) of Hermitian operators for $\mathcal N^\perp \cap M_n(\mathbb{C})_0$, and further let $H_{m+1} = -\sum_{k=1}^m H_k$. For $1\leq k \leq m+1$, define scalars $\lambda_k = \lambda_{k,\mathrm{min}}$ when $H_k$ has negative eigenvalues and where $\lambda_{k,\mathrm{min}}$ is the minimal eigenvalue of $H_k$, and put $\lambda_k = -1$ when $H_k \geq 0$. Let $\lambda = - \sum_k \lambda_k$ and define positive operators $F_k = \lambda^{-1} (H_k - \lambda_k I)$. Observe that $\{F_k \}^{m+1}_{k=1}$ forms a POVM as $\sum_k F_k = I$.

Now let $\{ R_k \}_{k=1}^{m+1}$ be a set of linearly independent density operators inside $M_n(\mathbb{C})$, and define an entanglement breaking channel $\Phi(\rho) = \sum_k \tr(\rho F_k) R_k$. Then we have $X\in \mathrm{nullspace}(\Phi)$ if and only if $\tr(XF_k)=\langle X,F_k\rangle=0$ for all $1\leq k \leq m+1$. However, we also have by construction:
\begin{eqnarray*}
  ( \mathrm{span}\{ F_k \}_{k=1}^{m+1} )^\perp &=&  ( \mathrm{span}\{ H_k \}_{k=1}^{m+1} \cup \{ I \} )^\perp \\
&=& ( \mathrm{span}\{ H_k \}_{k=1}^{m+1})^\perp  \cap  \{ I \}^\perp  \\
&=& (\mathcal N^\perp \cap M_n(\mathbb{C})_0)^\perp  \cap  M_n(\mathbb{C})_0  \\
&=& \mathcal N ,
\end{eqnarray*}
and so the result follows as $\mathrm{nullspace}(\Phi) = \mathcal N$.
\end{proof}

The following is a simple illustrative example of the construction above.

\begin{example}
The completely depolarizing channel $\Phi_{\mathrm{CD}}(A) = \frac{\tr(A)}{2} I$ on $M_2(\mathbb{C})$ is also implemented as a mixed unitary channel (discussed in more detail in the next section) with Kraus operators given by the normalized identity and Pauli operators $\{ \frac12 I, \frac12 X, \frac12 Y, \frac12 Z  \} $, where $X = E_{12} + E_{21}$,    $Y = iE_{21} -i E_{12}$, and $Z = E_{11} - E_{22}$.  Note these three operators also form an orthogonal basis for $M_2(\mathbb{C})_0$.

If we consider the subspace $\mathcal N = \mathrm{span} \{ Z \}$, then following the construction we can choose $ \sqrt{2} H_1 = X$ and $ \sqrt{2} H_2 = Y$ as Hermitian operators forming an orthonormal basis for $\mathcal N^\perp \cap M_2(\mathbb{C})_0 = \mathrm{span} \{ X,Y\}$, and put $H_3 = -H_1 -H_2$.
Finally, we can take $\{ R_1, R_2, R_3 \}$ to be any set of three linearly independent density operators inside $M_2(\mathbb{C})$, and define $\Phi$ as the entanglement breaking channel with $\{F_k, R_k \}$ defining its Holevo form. One can verify directly that $\mathrm{nullspace}(\Phi) = \mathcal N$.

If we further consider the subspace $\mathcal N = \mathrm{span} \{ X, Z \}$, then in this case the construction gives us
$\sqrt{2} H_1 = Y$ as a Hermitian operator forming an orthonormal basis for $\mathcal N^\perp \cap M_2(\mathbb{C})_0 = \mathrm{span} \{ Y\}$, and $H_2 = -H_1$.
As above and in the proof, a channel $\Phi$ that satisfies  $\mathrm{nullspace}(\Phi) = \mathcal N$ can then be explicitly defined in the Holevo form by choosing any two linearly independent density operators $\{ R_1, R_2 \}$.
\end{example}

\subsection{Bi-Unitary Channels}

In the spirit of these channel nullspace investigations, though somewhat outside our entanglement breaking channel focus, we can also consider the class of bi-unitary channels. Such channels are described by scenarios in which a system is exposed to unitary noise with some fixed probability $0 < p < 1$; as a channel this is given by the map $\Phi_U(\rho) = (1-p) \rho + p\, U\rho U^*$ for some fixed unitary operator $U$. These are a special case of mixed unitary channels considered in the next section and have been investigated in quantum error correction and numerical range theory \cite{choi2006quantum}.

Suppose $A$ is a non-zero Hermitian matrix in the nullspace of $\Phi_U$. Then we will have
\[
UAU^* = -\frac{1-p}{p} A .
\]
As $UAU^*$ has the same spectrum as $A$, it follows that this equation cannot be satisfied for $p \neq \frac12$ and hence $\mathrm{nullspace}(\Phi_U) = \{0\}$ in those cases.

When $p = \frac{1}{2}$, we have a further equation
\[
UAU^* = -A,
\]
which forces $A$ and $-A$ to have the same eigenvalues. Next, we can diagonalize $U$ as
\[
U = \sum _{i} w_iu_iu_i^*,
\]
with the $w_i$ lying on the unit circle of the complex plane and $u_i$ a set of orthonormal eigenvectors for $U$. Expanding $A = (a_{ij})$ in this basis gives
\[
U A U^* = \sum_{i,j} w_i \overline{w}_j a_{ij} u_iu_j^*,
\]
and so $w_i \overline{w}_j a_{ij} = -a_{ij}$ for all $i,j$.

We thus end up with two options for each entry: $a_{ij} = 0$ or $w_i = -w_j.$
This tells us that, in the case $p = \frac{1}{2}$, we have a non-trivial null space for $\Phi_U$ determined by the eigenvalues of $U$ that come in phase flip pairs,  reminiscent of quantum properties that generate, for instance, the Pauli matrices.

\section{Mixed Unitary Test via Entanglement Breaking Channel Nullspaces and Quantum Privacy}

One useful application of the ideas above is to the type of channel known as mixed unitary (or random unitary) channels.

\begin{defn} A channel $\Phi : M_n(\C) \rightarrow M_n(\C)$ is said to be mixed unitary if it can be written in the form
\[
\Phi(X) = \sum_{i=1}^d p_i U_i X U_i^*,
\]
where $p_i$ form a probability distribution ($p_i > 0$, $\sum_i p_i =1$) and $U_i \in U(n)$ are unitaries. The Kraus operators for $\Phi$ are thus given by $\sqrt{p_i} U_i$.
\end{defn}

The class of mixed unitary channels arise in all areas of quantum information, and so a number of investigations have been conducted on determining when a channel has this form. Important recent works on the topic include a proof that detecting mixed unitarity is NP-hard in general \cite{LeeWatrous2020mixed}, and an analysis of the mixed unitary rank of channels \cite{girard2020mixed}. We also mention earlier work on the class from different perspectives \cite{audenaert2008random,gurvits2003classical,ioannou2006computational}.

%An earlier look at this stuff that has some similar conditions, but not phrased in terms of the complementary channel, is https://arxiv.org/pdf/0709.0824.pdf
%Given a Choi matrix and deciding whether the resulting channel is mixed unitary is equivalent to being given a bipartite state and deciding whether it is in the convex hull of maximally entangled states; this is related to famous results that the separability problem is NP-hard, eg.Theorem 6.7 in https://arxiv.org/pdf/quant-ph/0303055.pdf and https://arxiv.org/pdf/quant-ph/0603199.pdf

Below we present a theorem that provides a connection between mixed unitary channels and nullspaces of entanglement breaking channels; first however we need the following result, which may be found as Theorem~1 in \cite{girard2020mixed}, but we will provide a short proof here for completeness.

\begin{lemma}\label{diagzero}
Let $\Phi : M_n(\C) \rightarrow M_n(\C)$ be a channel with canonical complement $\Phi^C : M_n(\C) \rightarrow M_d(\C)$. Then $\Phi$ is mixed unitary implemented with $r$ unitaries if and only if there exists an isometry $W : \mathbb{C}^d \rightarrow \mathbb{C}^r$ such that, for all $X \in M_n(\C)$ with $\tr (X) = 0$, the matrix $W\Phi^C(X)W^*$ has all of its diagonal entries equal to $0$.
\end{lemma}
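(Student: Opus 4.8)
The plan is to unpack what it means for a channel $\Phi(X) = \sum_i V_i X V_i^*$ to be mixed unitary and relate it directly to the diagonal of the complementary channel. Recall that any set of Kraus operators $\{\widetilde{V_i}\}_{i=1}^r$ for $\Phi$ arises from the canonical minimal set $\{V_j\}_{j=1}^d$ via $\widetilde{V_i} = \sum_j w_{ij} V_j$ for an isometry $W=(w_{ij}):\mathbb{C}^d\to\mathbb{C}^r$, and conversely every such isometry gives a valid Kraus set. So $\Phi$ is mixed unitary with $r$ unitaries if and only if there is an isometry $W:\mathbb{C}^d\to\mathbb{C}^r$ such that the resulting Kraus operators $\widetilde{V_i}$ are each a scalar multiple of a unitary, i.e. $\widetilde{V_i}^*\widetilde{V_i} = p_i I$ for scalars $p_i \geq 0$ (with $\sum_i p_i = 1$ automatic from trace preservation). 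The first step is to record this equivalence carefully.

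Next I would compute $W\Phi^C(X)W^*$ explicitly. From the definition $\Phi^C(\rho)=\sum_{i,j}\tr(V_j^*V_i\rho)E_{ij}$, conjugating by $W$ and using $\widetilde{V_i}=\sum_k w_{ik}V_k$ shows that the $(i,i)$ diagonal entry of $W\Phi^C(X)W^*$ is exactly $\sum_{k,\ell} w_{ik}\overline{w_{i\ell}}\tr(V_\ell^* V_k X) = \tr\big((\sum_\ell w_{i\ell}V_\ell)^*(\sum_k w_{ik}V_k)\,X\big) = \tr(\widetilde{V_i}^*\widetilde{V_i}\, X)$. This is the computational heart of the argument and the one step to do with care: the off-diagonal entries are $\tr(\widetilde{V_j}^*\widetilde{V_i}X)$, but only the diagonal matters here.

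With this identity in hand the equivalence falls out. If $\Phi$ is mixed unitary via $W$, then $\widetilde{V_i}^*\widetilde{V_i}=p_iI$, so the $(i,i)$ entry of $W\Phi^C(X)W^*$ equals $p_i\tr(X)$, which is $0$ whenever $\tr(X)=0$. Conversely, suppose $W$ is an isometry such that $\tr(\widetilde{V_i}^*\widetilde{V_i}X)=0$ for every trace-zero $X$. Since the trace-zero matrices are precisely $\{I\}^\perp$ in the trace inner product, this forces $\widetilde{V_i}^*\widetilde{V_i}$ to lie in the span of $I$, i.e. $\widetilde{V_i}^*\widetilde{V_i}=p_iI$ for some scalar $p_i$; positivity gives $p_i\geq 0$ and trace preservation $\sum_i\widetilde{V_i}^*\widetilde{V_i}=I$ gives $\sum_i p_i=1$. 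Thus each nonzero $\widetilde{V_i}$ is $\sqrt{p_i}$ times a unitary, so $\Phi$ is mixed unitary implemented with (at most, hence exactly after discarding any zero terms) $r$ unitaries. I do not anticipate a serious obstacle; the only mild subtlety is bookkeeping around possibly-zero $p_i$ and the precise meaning of ``implemented with $r$ unitaries,'' which I would address with a sentence noting that zero Kraus terms can be absorbed or padded freely.
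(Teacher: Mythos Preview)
Your proposal is correct and follows essentially the same route as the paper's own proof: both identify the $(i,i)$ entry of $W\Phi^C(X)W^*$ with $\tr(\widetilde{V_i}^*\widetilde{V_i}\,X)$ for the Kraus set $\widetilde{V_i}=\sum_j w_{ij}V_j$, and then use that vanishing against all trace-zero $X$ forces $\widetilde{V_i}^*\widetilde{V_i}\in\{I\}^{\perp\perp}=\mathbb{C}I$. Your treatment is slightly more careful than the paper's in flagging the possibility that some $p_i$ could be zero; the paper simply asserts the multiples are nonzero without comment.
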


\begin{proof}
Suppose first that $\Phi$ is mixed unitary; then there exist unitaries $\{U_i\}_{i=1}^r$ and probabilities $\{p_i\}_{i=1}^r$ and an isometry $W$, naturally determined by the canonical complement description, such that the $(i,j)$ entry of the matrix $W \Phi^C(X) W^*$ is equal to $\tr(\sqrt{p_ip_j} U_j^*U_i X)$. Setting $i=j$ we get $p_i \tr(X)$ and so for all traceless $X$, the diagonal entries of this matrix are $0$.

For the converse, suppose an isometry $W = (w_{ij})_{r\times d}$ exists with the property that each of the diagonal entries of $W \Phi^C(X) W^*$ are zero for all traceless $X$. Define $\widetilde{V_i} = \sum_{j=1}^d w_{ij} V_i$ for $1 \leq i \leq d$, where $\{ V_i \}$ are a set of Kraus operators for $\Phi$. Then $\{\widetilde{V_i}\}$ is also a set of Kraus operators for $\Phi$ as $W$ is an isometry, and one can check the $(i,j)$ entry satisfies $(W \Phi^C(X) W^*)_{ij} = \tr(\widetilde{V_j}^*\widetilde{V_i}X)$. In particular, we have
\[
\tr(\widetilde{V_i}^*\widetilde{V_i} X ) = 0
\]
for all $i$, and for all traceless $X$. Hence we have
$\widetilde{V_i}^*\widetilde{V_i} \in \{ I\}^{\perp^{\perp}}$, and so $\widetilde{V_i}^*\widetilde{V_i}$ is a (non-zero) multiple of the identity: $\widetilde{V_i}^*\widetilde{V_i} = p_i I$. Thus, $U_i :=\frac{1}{\sqrt{p_i}} \widetilde{V_i}$ is unitary, and $\widetilde{V_i} = \sqrt{p_i}U_i$. That the set $\{p_i\}_{i=1}^r$ forms a probability distribution follows from trace preservation of the original map:
\begin{eqnarray*}
\sum_i p_i I  = [ \widetilde{V_1}^* \,  \widetilde{V_2}^*   \ldots ] [ \widetilde{V_1} \, \widetilde{V_2}  \ldots   ]^t
= [V_1^* \dots ] W^* W [V_1 \ldots ]^t = \sum_i V_i^* V_i = I;
\end{eqnarray*}
the last equality using the fact that the $\{V_i \}$ are Kraus operators for $\Phi$.
\end{proof}

  We use the term diagonal algebra to mean an algebra that is unitarily equivalent to the (commutative) subalgebra of diagonal matrices inside the full algebra of square matrices of a given size.  Any quantum channel whose range is contained in a diagonal algebra must be entanglement breaking. This fact appeared in \cite{stormer} however we include a short proof for completeness.

\begin{lemma}
Let $E: M_d(\C) \rightarrow M_r(\C)$ be a quantum channel whose range is contained in a diagonal algebra, then $E$ is entanglement breaking.
\end{lemma}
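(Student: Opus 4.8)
The plan is to show that a channel $E : M_d(\mathbb{C}) \to M_r(\mathbb{C})$ whose range sits inside a diagonal algebra must have a Holevo-form representation (\ref{holevo}), which by definition makes it entanglement breaking. First I would use the unitary freedom: since the range is unitarily equivalent to the diagonal subalgebra, post-compose $E$ with a unitary conjugation so that without loss of generality $E(X)$ is a diagonal matrix for every $X \in M_d(\mathbb{C})$; this does not affect whether $E$ is entanglement breaking (conjugation by a fixed unitary is itself a unitary channel, hence entanglement breaking, and composition of an entanglement breaking channel with any channel is entanglement breaking). So it suffices to treat the case where $E$ actually takes values in the diagonal matrices.

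Next I would extract the coordinate functionals. Writing $E(X) = \sum_{k=1}^r f_k(X)\, E_{kk}$, where $E_{kk} = e_k e_k^*$ and $f_k(X)$ is the $(k,k)$ entry of $E(X)$, each $f_k : M_d(\mathbb{C}) \to \mathbb{C}$ is linear. Positivity of $E$ forces each $f_k$ to be a positive linear functional on $M_d(\mathbb{C})$ — indeed $f_k(X) = \tr(E_{kk} E(X)) = \langle E_{kk}, E(X)\rangle \ge 0$ when $X \ge 0$, since $E(X)$ is then a positive diagonal matrix. (Complete positivity is automatic here and gives nothing extra, but one could also note it directly.) By the duality between positive functionals and positive operators, there exist positive operators $F_k \in M_d(\mathbb{C})$, $F_k \ge 0$, with $f_k(X) = \tr(F_k X)$ for all $X$. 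Then $E(X) = \sum_{k=1}^r \tr(F_k X)\, E_{kk}$, and taking $R_k := E_{kk}$, which are (rank one) density operators, puts $E$ in exactly the Holevo form (\ref{holevo}). Trace preservation of $E$ gives $\tr(X) = \tr(E(X)) = \sum_k \tr(F_k X) = \tr\big(\big(\sum_k F_k\big) X\big)$ for all $X$, hence $\sum_k F_k = I$, so the $\{F_k\}$ form a POVM and $E$ is entanglement breaking.

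Alternatively, and perhaps more cleanly, I could argue at the level of Kraus operators: if the range of $E$ lies in the diagonal algebra, then for each pair $k \ne \ell$ the map $X \mapsto \langle E_{k\ell}, E(X)\rangle$ is identically zero, and this off-diagonal vanishing, applied to a Kraus decomposition $E(X) = \sum_i V_i X V_i^*$, constrains the $V_i$; but this route is fussier and I would prefer the functional-theoretic argument above. I do not expect a serious obstacle here — the statement is essentially the observation that diagonal-valued positive maps are ``classical'' — so the only thing to be careful about is the reduction to literally-diagonal range (handling the unitary equivalence) and invoking the correct direction of the positive-functional/positive-operator correspondence; once those are in place the Holevo form drops out immediately.
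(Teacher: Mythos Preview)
Your proof is correct and follows essentially the same route as the paper's: extract the coordinate functionals of $E$ with respect to the diagonal basis, observe they are positive linear functionals and hence of the form $X \mapsto \tr(F_k X)$ with $F_k \ge 0$, and then read off the Holevo form, with the POVM condition $\sum_k F_k = I$ coming from trace preservation. The only cosmetic difference is that the paper works directly with an orthonormal basis $\{u_i\}$ diagonalizing the range, whereas you first conjugate to the standard basis; this avoids your reduction step entirely.

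One small slip in that reduction: in your parenthetical you write that conjugation by a fixed unitary ``is itself a unitary channel, hence entanglement breaking,'' which is false (nontrivial unitary channels are never entanglement breaking). The fact you actually need, and which you do state correctly at the end of the parenthetical, is that composing an entanglement breaking channel with any channel on either side yields an entanglement breaking channel; applying this with $\mathrm{Ad}_{U^*}$ recovers $E$ from $\mathrm{Ad}_U \circ E$.
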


\begin{proof} Let $\{u_i\}_{i=1}^{r}$ be an orthonormal basis of $\C^r$ such that $\{u_iu_i^*\}_{i=1}^{r}$ span the range of $E$.  Then there exist linear functionals $\{ \phi_i \}_{i=1}^r$ on $M_d(\C)$ such that $E(X)=\sum_{i=1}^r \phi_i (X)u_iu_i^*$.  If $X\geq 0$, then $E(X)\geq 0$ which means $\phi_i(X)\geq 0$ for all $i$.  Hence for $1\le i\le r$, there exists positive semidefinite $F_i\in M_d(\C)$ such that $\phi_i (X)=\tr(F_iX)$.  Since $E(X)$ is trace preserving, $\tr(X)=\sum_{i=1}^r \phi_i (X)=\tr(X (\sum_{i=1}^rF_i))$ for all $X$.  Therefore $\sum_{i=1}^rF_i=I_d$.  Since $E(X)=\sum_{i=1}^r \tr(F_iX) u_iu_i^*$ has a Holevo form, it is entanglement breaking.
\end{proof}

We are now ready to state and prove the theorem connecting mixed unitary channels to nullspaces of entanglement breaking channels, and to the notion from quantum privacy discussed in Section~2.

\begin{theorem}\label{mixed-unitary-test}
 Let $\Phi : M_n(\C) \rightarrow M_n(\C)$ be a channel with canonical complement $\Phi^C:M_n(\C) \rightarrow M_d(\C)$. Then $\Phi$ is mixed unitary and implemented with $r$ unitaries if and only if there exists a quantum channel $E: M_d(\C) \rightarrow M_r(\C)$ of Choi rank $r$ taking $M_d(\C)$ onto an $r$-dimensional diagonal algebra, such that $E$ privatizes the range of $\Phi^C$; that is, 
\[
E(\Phi^C(X)) = \frac{1}{r}\tr(X)I_r \quad \forall X\in M_n(\C).
\]
\end{theorem}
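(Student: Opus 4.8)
The strategy is to connect the two characterizations through Lemma~\ref{diagzero}, using the diagonal-algebra lemma to handle the ``$E$ takes $M_d(\mathbb{C})$ onto a diagonal algebra'' clause. The key observation is that privatizing the range of $\Phi^C$ by a channel onto a diagonal algebra is essentially the same data as producing an isometry $W$ that zeroes out the diagonal of $W\Phi^C(X)W^*$ on traceless $X$.

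First I would prove the forward direction. Suppose $\Phi$ is mixed unitary implemented with $r$ unitaries. By Lemma~\ref{diagzero} there is an isometry $W:\mathbb{C}^d\to\mathbb{C}^r$ such that $W\Phi^C(X)W^*$ has zero diagonal for all traceless $X$. Define $E:M_d(\mathbb{C})\to M_r(\mathbb{C})$ by $E(Y) = \Delta(WYW^*)$, where $\Delta$ is the pinching (compression) onto the diagonal matrices in $M_r(\mathbb{C})$; this is a composition of a channel with a conditional expectation, hence a channel, and its range lies in the diagonal algebra. For $Y=\Phi^C(X)$ with $\tr(X)=0$ we get $E(\Phi^C(X))=0$, and for $\tr(X)=1$ the diagonal entries of $W\Phi^C(X)W^*$ are the $p_i$ (from the proof of Lemma~\ref{diagzero}), so $E(\Phi^C(X)) = \mathrm{diag}(p_1,\dots,p_r)$. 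To land exactly on $\frac1r\tr(X)I_r$ and to control the Choi rank, I would instead use the unitary freedom in the isometry $W$: replace $W$ by $W' = PW$ where $P$ is a suitable permutation-type or unitary adjustment, or more directly absorb a rescaling — since the $p_i$ are positive and sum to $1$, I can post-compose with the classical stochastic map on the diagonal algebra sending the $i$-th atom to $\frac{1}{r p_i}$ times a uniform redistribution; one checks this keeps $E$ a channel, keeps the range inside the diagonal algebra, kills the traceless part, and sends a density operator to $\frac1r I_r$. The Choi rank $r$ then comes from the fact that $E$ restricted to the image has a Holevo form with exactly $r$ nonzero POVM elements (by construction $W$ is $d\times r$ with full rank $d\le r$, and the $r$ diagonal functionals are linearly independent when restricted appropriately) — here I would lean on the diagonal-algebra lemma's proof, which exhibits the $F_i$ explicitly, together with the observation that an onto map to an $r$-dimensional diagonal algebra needs Choi rank at least $r$.

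For the converse, suppose such an $E$ exists. Since $E$ maps onto an $r$-dimensional diagonal algebra, after conjugating by a unitary I may assume the range is the honest diagonal of $M_r(\mathbb{C})$, so $E(Y) = \sum_{i=1}^r \mathrm{tr}(G_i Y)\, e_ie_i^*$ for a POVM $\{G_i\}_{i=1}^r$ in $M_d(\mathbb{C})$, with the $G_i$ linearly independent since the Choi rank is $r$ and the range is $r$-dimensional (each atom $e_ie_i^*$ must be genuinely hit). The privatization hypothesis says $\sum_i \mathrm{tr}(G_i\Phi^C(X))\,e_ie_i^* = \frac1r\mathrm{tr}(X) I_r$, i.e. $\mathrm{tr}\big((\Phi^C)^\dagger(G_i)\, X\big) = \frac1r\mathrm{tr}(X)$ for every $X$ and every $i$, hence $(\Phi^C)^\dagger(G_i) = \frac1r I_n$ for all $i$. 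Now I unpack $\Phi^C$ using its definition: with minimal Kraus operators $\{V_k\}_{k=1}^d$ for $\Phi$, we have $\Phi^C(X)=\sum_{k,l}\mathrm{tr}(V_l^*V_k X)E_{kl}$, so writing $G_i = (g^{(i)}_{kl})$, the condition $(\Phi^C)^\dagger(G_i)=\frac1r I_n$ reads $\sum_{k,l} g^{(i)}_{kl} V_l^* V_k \cdot(\text{appropriately}) = \frac1r I_n$ — more precisely $\sum_{k,l}\overline{g^{(i)}_{kl}}\,V_l^*V_k$ or $\sum_{k,l} g^{(i)}_{lk} V_l^* V_k$ depending on pairing conventions. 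Since $G_i\ge 0$, write $G_i = \sum_m \eta^{(i)}_m (\eta^{(i)}_m)^*$ (columns $\eta^{(i)}_m\in\mathbb{C}^d$); then the operators $\widetilde V^{(i)}_m := \sum_k \overline{(\eta^{(i)}_m)_k}\, V_k$ are new Kraus operators for $\Phi$ (the coefficient matrix stacking all these is an isometry precisely because $\sum_i G_i = I_d$), and the condition becomes $\sum_m (\widetilde V^{(i)}_m)^* \widetilde V^{(i)}_m = \frac1r I_n$ for each fixed $i$. This is not yet ``each Kraus operator is a scaled unitary'', but summing over $i$ recovers trace preservation, and I would now invoke Lemma~\ref{diagzero} in the reverse direction: the family $\{G_i\}$ assembles into exactly the isometry $W$ (after a Cholesky/square-root normalization, the entries $w_{i\cdot}$ coming from $\sqrt{G_i}$) for which $W\Phi^C(X)W^*$ has vanishing diagonal on traceless $X$, which is the hypothesis of that lemma with the roles arranged so that the conclusion is mixed unitarity with $r$ unitaries. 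The cleanest route is probably to observe directly: the hypothesis $E\circ\Phi^C(X)=\frac1r\mathrm{tr}(X)I_r$ forces, for each diagonal atom, a functional $X\mapsto\mathrm{tr}(F_iX)$ that is a constant multiple of the trace, and feeding the square roots $\sqrt{F_i}$ as rows of an $r\times d$ matrix $W$ gives the isometry demanded by Lemma~\ref{diagzero}; the diagonal-of-$W\Phi^C W^*$ entries are then exactly these constant-multiple-of-trace functionals, which vanish on traceless $X$.

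\textbf{Main obstacle.} The bookkeeping around \emph{Choi rank exactly $r$} and the precise pairing of $E$'s POVM with the Kraus operators of $\Phi$ is where care is needed. In the forward direction I must produce an $E$ whose Choi rank is $r$ and not merely $\le r$ — the natural $\Delta(W\cdot W^*)$ has Choi rank $d$, so the construction must genuinely use all $r$ diagonal directions, which is where the classical redistribution step (or an alternative direct construction of the $F_i$) earns its keep. In the converse, the subtlety is that ``onto an $r$-dimensional diagonal algebra with Choi rank $r$'' is exactly the hypothesis that makes the associated POVM $\{G_i\}$ linearly independent and forces each $G_i\neq 0$, which in turn is what lets the square roots of the $G_i$ assemble into a genuine (full column rank) isometry $W$ rather than a mere partial isometry — and it is that isometry, plugged into Lemma~\ref{diagzero}, that delivers precisely ``$r$ unitaries'' rather than some smaller number. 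I would structure the write-up so that both directions are visibly applications of Lemma~\ref{diagzero} plus the diagonal-algebra lemma, with the rank counting isolated into one clean paragraph.
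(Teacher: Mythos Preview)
Your overall architecture---reduce both directions to Lemma~\ref{diagzero} and use the Holevo/diagonal--algebra description of $E$---is exactly the paper's. But each direction has a real gap.

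\textbf{Forward direction.} Your candidate $E(Y)=\Delta(WYW^*)$ is CPTP and, contrary to your worry, already has Choi rank $r$ (its Kraus operators are $e_iw_i^*$ with $w_i$ the $i$th row of $W$, and these rows are nonzero since $\sqrt{p_i}U_i=\sum_j w_{ij}V_j\neq 0$). The actual obstruction is the value on $\Phi^C(I_n)$: you get $\mathrm{diag}(p_1,\dots,p_r)$, not $\tfrac{1}{r}I_r$, and your ``classical redistribution'' cannot repair this without destroying either trace preservation or the onto/Choi-rank condition (post-composing with the averaging map collapses the range to $\mathbb{C}I_r$). The paper does \emph{not} attempt to keep $E$ trace-preserving in the ordinary sense: it rescales the output vectors by $1/\sqrt{p_i}$, obtaining an $E$ that is completely positive, has range the full diagonal algebra, and satisfies $E(\Phi^C(X))=\tfrac{1}{r}\tr(X)I_r$, but is only trace-preserving with respect to a weighted trace $\tr_P$ on that algebra. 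This rescaling is the missing idea; insisting on a standard CPTP $E$ would force $\Phi$ to be an \emph{equal-weight} mixture of $r$ unitaries, which need not hold.

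\textbf{Converse.} Your POVM reformulation and the identity $(\Phi^C)^\dagger(G_i)=\tfrac{1}{r}I_n$ are correct, but you never extract the crucial consequence of ``range is an $r$-dimensional diagonal algebra and Choi rank $r$'': each $G_i$ is \emph{rank one}. Indeed, writing $G_i=\sum_m f^{(i)}_m(f^{(i)}_m)^*$ gives Kraus operators $e_i(f^{(i)}_m)^*$, which are linearly independent across both indices, so $\sum_i\mathrm{rank}(G_i)$ equals the Choi rank $r$ and hence each $G_i=w_iw_i^*$. Then the matrix $W$ with rows $w_i^*$ is a genuine isometry (from $\sum_i G_i=I_d$) and $(W\Phi^C(X)W^*)_{ii}=\tr(G_i\Phi^C(X))=\tfrac{1}{r}\tr(X)$, so Lemma~\ref{diagzero} applies with exactly $r$ unitaries. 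Your ``feed $\sqrt{G_i}$ as rows of $W$'' is not well-defined when $G_i$ has rank $>1$, and without the rank-one reduction you cannot control that the resulting isometry lands in $\mathbb{C}^r$ rather than some larger space. The paper's converse is essentially this argument, phrased via the rank-one Kraus form of the entanglement-breaking map $E$ (and carrying along the weighted trace, ending with an isometry in a $P$-inner product).
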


\begin{proof}
First suppose $\Phi$ is mixed unitary and implemented with $r$ multiples of unitaries. By Lemma \ref{diagzero} there must be an isometry $W:\C^d \rightarrow \C^r$ such that $W\Phi^C(X)W^*$ has $0$ on its diagonal when $X$ is traceless. Let $w_1 , \ldots , w_r$ be the columns of $W^*$; then $w_i^*\Phi^C(X)w_i = 0$ for all traceless $X$. Also the condition that $W$ is an isometry may be phrased as $I_d = W^* W = \sum_{i=1}^r w_i w_i^*.$

Define $p_i = \frac{1}{n} w_i^*\Phi^C(I) w_i$. As $\Phi^C$ is trace preserving and $W$ an isometry, we have that $\sum_{i=1}^r p_i = \frac{1}{n}\tr(I_n) = 1$.

Let $\{\widetilde{u_i}\}_{i=1}^r$ be any orthonormal basis for $\C^r$ scaled uniformly by $\frac{1}{\sqrt{r}}$, so $\widetilde{u_i}^* \widetilde{u_j} = r^{-1} \delta_{ij}$,  and rescale these again to form the vectors $\{u_i\}_{i=1}^r :=\{\sqrt{p_i}^{-1}\widetilde{u_i}\}_{i=1}^r$ which still form an orthogonal basis for $\C^r$. Then define the entanglement breaking map $E$ to have Kraus operators $\{u_i w_i^* : \mathbb{C}^d \rightarrow \mathbb{C}^r \}_{i=1}^r$. It is clear that the Choi rank of $E$ is $r$, from the fact that $\mathrm{Range}(E) = \mathrm{span}\{u_iu_i^*\}_{i=1}^r$ and the fact that the $\{u_i\}$ form an orthogonal basis.

Then, for any $X\in M_n(\mathbb{C})$, write $X = n^{-1}\tr(X)I + X_0$ where $X_0$ is traceless, and observe
\[
E(\Phi^C(X_0))  = \sum_{i=1}^r u_i w_i^* \Phi^C(X_0)w_i u_i^* = 0.
\]
That is, $E$ annihilates the traceless part of $\mathrm{Range}(\Phi^C)$. Thus it remains to see what $E$ does to $\Phi^C(I)$:
\begin{align*}
E(\Phi^C(I)) & = \sum_{i=1}^r \frac{1}{p_i} \widetilde{u_i}w_i^*\Phi^C(I)w_i\widetilde{u_i}^* \\
& = n\sum_{i=1}^r \frac{p_i}{p_i} \widetilde{u_i}\widetilde{u_i}^* \\
& = \frac{n}{r} I_r,
\end{align*}
which follows from the definition of $p_i$ and the fact that $\sqrt{r} \widetilde{u_i}$ form an orthonormal basis for $\C^r$.

Observe that although the map $E$ is not trace preserving in the usual trace, the range of $E$, the operator space $\mathrm{span}\{u_iu_i^*\}$, is unitarily equivalent to the $r$-dimensional diagonal algebra, $\Delta_r\cong \mathbb{C}^r$; suppose the unitary implementing this is $V$. Let $P=V^*\mathrm{diag}(p_1,\cdots, p_r)V$, which is clearly a positive definite matrix in the commutant $\mathrm{span}\{u_iu_i^*\}' = \mathrm{span}\{u_iu_i^*\}$, and hence we may define the trace $\tr_{P} = \tr (DV^*PV)$ for $D \in V^*\Delta_r V$, and $E$ is in fact trace-preserving with respect to this  new trace. This is because
$$\tr(\Phi^C(I)) = \tr(\Phi^C(I)W^*W) = \sum_{i=1}^r w_i^*\Phi^C(I) w_i = n\sum_{i=1}^r p_i.$$

For the other direction, suppose $E : M_d(\C) \rightarrow M_r(\C)$ exists and has the required property of annihilating the traceless part of $\mathrm{Range}(\Phi^C)$ and mapping $\Phi^C(I)$ to a multiple of the identity.

As the range of $E$ is a commutative algebra, $\Delta$, the trace on $\Delta$ must have the form $\tr_{\Delta}(D) = \frac{1}{n}\tr(DP)$ for some $P \in \Delta' =\Delta$. Also, for any set of Kraus operators of the form $\{u_i w_i^*\}_{i=1}^m$ ($m\geq r$) for $E$, we must have that $u_iu_i^* \in \Delta$, and hence $\{u_i\}$ must contain an orthogonal set of vectors from $\C^r$; though redundancy is possible, there is no loss of generality in assuming the rank-one projections $u_iu_i^*$ are unique; as we know the Choi rank of $E$ is $r$, the set $\{u_i\}_{i=1}^r$ is in fact an orthogonal basis.

Then, for any traceless $X_0$, we have that
$$E(\Phi^C(X_0)) = \sum_{i=1}^r \langle w_i, \Phi^C(X_0)w_i\rangle u_iu_i^* = 0,$$ and since $\{u_i\}$ are orthogonal, the rank one operators $u_iu_i^*$ are linearly independent and so $w_i^*\Phi^C(X_0)w_i = 0$ for all $i$.

Finally, since $E$ is trace-preserving between the regular trace on $M_d(\C)$ and $\tr_{\Delta}$, we have that
$$\sum_{i=1}^r w_iu_i^*u_i w_i^* = P.$$
Hence the matrix $W^*$ with columns $\frac{1}{\|u_i\|} w_i$ is an isometry from $\C^d$ into $\C^r$ in the inner product $\langle v,w\rangle_P = \langle v, Pw\rangle$ with the property that $W\Phi^C(X_0) W^*$ has zeroes on the diagonal. By Lemma \ref{diagzero}, $\Phi$ must be mixed unitary.
\end{proof}

The following pair of examples illustrate the mechanics of the theorem construction and the test it provides in special cases of interest.

\begin{example} Let $\Phi_{CD} : M_n(\C) \rightarrow M_n(\C)$ be the completely depolarizing map, recall as characterized by $\Phi_{CD}(X) = \frac{\tr(X)}{n}I_n$ for all $X \in M_n(\C)$. One set of Kraus operators for this map is $\{\frac{1}{\sqrt{n}}E_{ij}\}_{i,j=1}^n$, and hence the canonical complement is given by
$$\Phi_{CD}^C(X) = \sum_{i,j,r,s=1}^n \tr (E_{sr}E_{ij}X)E_{ir}\otimes E_{js};$$
and thus $\Phi_{CD}^C(X) = I_n \otimes X$.

Let $\{U_i\}_{i=1}^n$ be any set of mutually orthogonal unitaries in the trace inner product on $\C^n$; for example the Weyl unitaries $W_{ij}:=X^iZ^j$ where $X$ is the cyclic shift and $Z$ is diagonal with diagonal entry $Z_{ii} = \omega^i$, where $\omega$ is a primitive $n^{th}$ root of unity.

Let $u_i = \mathrm{vec}(U_i)$, the vector obtained by stacking the columns of $U_i$ into a column vector. It is well known that $\mathrm{vec}(XU_i) = I_n\otimes X u_i$ and hence
$$\tr(X) = \tr(U_i^*XU_i) = \langle u_i, (I\otimes X)u_i\rangle.$$
As the $u_i$ are mutually orthogonal, the matrix $V$ with columns $\frac{1}{\sqrt{n}}u_i$ is a unitary, and satisfies
$$V^*\Phi_{CD}^C(X)V_{ii} = \frac{1}{n} \langle u_i,(I\otimes X)u_i\rangle = 0$$
whenever $\tr(X) = 0$ and so $\Phi_{CD}$ must be mixed unitary. Indeed, one can verify directly that the map $\Phi_{CD}$ is implemented with Kraus operators given by any maximal set of orthogonal unitaries, evenly scaled for trace preservation.

Hence, if we form the entanglement breaking channel $E:M_{n^2}(\C) \rightarrow M_{n^2}(\C)$ to have Kraus operators $\{\frac{1}{\sqrt{n}} e_iu_i^*\}_{i=1}^{n^2}$, where $e_i$ is the standard basis for $\mathbb{C}^{n^2}$, we see that
%\begin{align*}
\[
E(\Phi_{CD}^C(X))  = E(I_n\otimes X)
 = \frac{1}{n}\sum_{i=1}^{n^2} \langle u_i,(I_k\otimes X)u_i\rangle E_{ii}
 = \frac{\tr(X)}{n} I_{n^2}.
%\end{align*}
\]

\end{example}
\begin{example}
Consider the Werner-Holevo channel $\Phi:M_3(\mathbb{C})\rightarrow M_3(\mathbb{C})$ defined by $\Phi(X)=\frac{1}{2}(\tr(X)I-X^t)$, where $X^t$ denotes the transpose of $X$. It is well known that this map is not mixed unitary (see \cite{land-str}). We will use Theorem \ref{mixed-unitary-test} to detect this fact.

One can check that a set of Kraus operators for $\Phi$ are given by the following three matrices:
\[
K_1=\begin{bmatrix}
0 & 0 & 0\\
0 & 0 & \frac{1}{2}\\
0 & \frac{-1}{2} & 0
\end{bmatrix}, \quad
K_2=\begin{bmatrix}
0 & 0 & \frac{1}{2}\\
0 & 0 & 0\\
\frac{-1}{2} & 0 & 0
\end{bmatrix}, \quad
K_3=\begin{bmatrix}
0 & \frac{1}{2} & 0\\
\frac{-1}{2} & 0 & 0\\
0 & 0 & 0
\end{bmatrix}.
\]

Now it follows that in this case the complementary channel is $\Phi$ itself; that is, $\Phi^C=\Phi$ as can be verified directly from the definition of $\Phi^C$. As the channel has Choi rank equal to $3$, we have $\Phi^C=\Phi:M_3(\mathbb{C})\rightarrow M_3(\mathbb{C})$. Suppose $\Phi$ is mixed unitary with $r$ unitaries. Then by the proof of Theorem~\ref{mixed-unitary-test} we have an entanglement breaking map $E:M_3(\mathbb{C})\rightarrow M_r(\mathbb{C})$ of Choi rank $r$ such that $E(\Phi^C(X))=\frac{1}{r}\tr(X)I_r$. As the range of $\Phi=\Phi^C$ is the whole matrix space $M_3$, the entanglement breaking map $E$ is essentially the completely depolarizing map $X\mapsto \tr (X)\frac{I_r}{r}$ from $M_3(\mathbb{C})$ to $M_r$. However, we know that this map has Choi rank $3r$, which gives a contradiction.

\end{example}

\section{Construction of Private Algebras for Entanglement Breaking Channels}

In this section, we build on the nullspace analyses above to derive constructions of algebras privatized by certain entanglement breaking channels. We first review some details of an important operator structure from operator theory \cite{choi1}, which in more recent years has also found a role in quantum information \cite{c-j-k,johnston,kribs2018quantum,miza,rahaman2018eventually}.

\begin{defn}
The multiplicative domain, $\mathcal{M}_\Phi$, of a completely positive map $\Phi : M_n(\mathbb{C}) \rightarrow M_n(\mathbb{C})$ is the $\ast$-subalgebra of $M_n(\mathbb{C})$ given by:
\[
\{A \in M_n(\mathbb{C}) : \Phi(AX) = \Phi(A)\Phi(X); \, \Phi(XA) = \Phi(X)\Phi(A) \  \forall X\in M_n(\mathbb{C}) \}.
\] 	
\end{defn}

We note that for unital maps ($\Phi(I) =I$), a projection $P$ belongs to $\mathcal M_{\Phi}$ if and only if $\Phi(P)$ is a projection \cite{miza}.  From \cite{rahaman2018eventually}, we also know that for any unital PPT map $\Phi$ (and in particular this applies to the dual $\Phi^\dagger$ of any entanglement breaking channel), and any projection $P$ in the multiplicative domain $\mathcal{M}_{\Phi}$, that $\Phi(X) = \Phi(PXP) + \Phi(QXQ) = \Phi(P)\Phi(X)\Phi(P) + \Phi(Q)\Phi(X)\Phi(Q)$ for all $X \in M_n(\C)$, where $Q:=I-P$. It is also the case that $\Phi(P)\Phi(X) = \Phi(X)\Phi(P)$.
We can use this to show that if $P,Q$ are orthogonal projections in the multiplicative domain, that any $X$ for which $PXQ =
X$ must satisfy $\Phi(X) = 0$, as
$$\Phi(X) = \Phi(PXQ) = \Phi(P)\Phi(X)\Phi(Q) = \Phi(PQ)\Phi(X) = 0.$$

The following structural result on the multiplicative domain of dual maps for entanglement breaking channels is used in our results below and may be of independent interest.

\begin{lemma}\label{vsorthog}
Let $\Phi: M_n(\mathbb{C}) \rightarrow M_n(\mathbb{C})$ be an entanglement breaking channel given by Eq.~(\ref{ebrankone}). Let $\mathcal{M}_{\Phi^{\dagger}}$ be the multiplicative domain of $\Phi^{\dagger}$, and let $\{P_k\}_{k=1}^r\subseteq \mathcal{M}_{\Phi^{\dagger}}$ be a set of mutually orthogonal projections summing to the identity. Then for all $i$ there is a unique $k$ such that  $v_i = P_k v_i$.

Further let $\mathcal R_k \subseteq \{1,2,\cdots, d\}$ for $1\leq k \leq r$ be the subsets determined by the partition generated by  the $P_k$, and define $\mathcal W_k = \mathrm{span}\{w_j\}_{j\in \mathcal R_k}$. Then $\mathcal W_k$ are mutually orthogonal subspaces and the projections $Q_k$ onto $\mathcal W_k$ are a set of mutually orthogonal projections summing to the identity. Moreover, for all $X\in M_n(\mathbb{C})$ and $1 \leq k \leq r$ we have
$$
\Phi(Q_k X) = P_k \Phi(X) =\Phi(X)P_k = \Phi(X Q_k),
$$
and so if $X = Q_k X Q_l$ with $k\neq l$, then $\Phi(X)=0$.
\end{lemma}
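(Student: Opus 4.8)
The plan is to prove the two claims in order: first, that each $v_i$ lies in the range of exactly one $P_k$, and then, building on this, to read off the subspace structure and the four intertwining identities essentially by substituting into the rank-one Kraus form of $\Phi$.

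For the first claim I would pass to the dual channel. Since $\Phi$ has Kraus operators $V_i = v_i w_i^*$, the dual $\Phi^\dagger$ has Kraus operators $V_i^* = w_i v_i^*$, so
\[
\Phi^\dagger(P_k) \;=\; \sum_{i=1}^d (v_i^* P_k v_i)\, w_i w_i^* \;=\; \sum_{i=1}^d \|P_k v_i\|^2\, w_i w_i^* \;\geq\; 0,
\]
and, using $\|v_i\| = 1$ and $\sum_k P_k = I$, one has $\sum_k \|P_k v_i\|^2 = \|v_i\|^2 = 1$ for each $i$. Because $P_k \in \mathcal{M}_{\Phi^\dagger}$ and the $P_k$ are mutually orthogonal, the multiplicative-domain identity applied to $P_k$ and $X = P_l$ gives $\Phi^\dagger(P_k)\Phi^\dagger(P_l) = \Phi^\dagger(P_k P_l) = 0$ whenever $k \neq l$, hence $\tr\big(\Phi^\dagger(P_k)\Phi^\dagger(P_l)\big) = 0$. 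Expanding this trace with the displayed formula and $\tr(w_i w_i^* w_j w_j^*) = |\langle w_i, w_j\rangle|^2 \geq 0$ yields
\[
\sum_{i,j=1}^d \|P_k v_i\|^2\, \|P_l v_j\|^2\, |\langle w_i, w_j\rangle|^2 \;=\; 0,
\]
a sum of nonnegative terms, so every term vanishes. Taking $i = j$ and using $w_i \neq 0$ forces $\|P_k v_i\|^2 \|P_l v_i\|^2 = 0$ for all $k \neq l$; combined with $\sum_k \|P_k v_i\|^2 = 1$ this means exactly one index $k$ has $\|P_k v_i\|^2 \neq 0$, and for it $\|P_k v_i\|^2 = 1 = \|v_i\|^2$, which gives $v_i = P_k v_i$. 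Uniqueness is immediate, since distinct $P_k$ have orthogonal ranges while $\|v_i\| = 1$. This proves the first claim and fixes the partition $\{1, \dots, d\} = \mathcal{R}_1 \sqcup \cdots \sqcup \mathcal{R}_r$, with $\mathcal{R}_k = \{i : v_i = P_k v_i\}$.

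For the second claim, mutual orthogonality of the $\mathcal{W}_k$ is already contained in the term-by-term vanishing above: if $i \in \mathcal{R}_k$ and $j \in \mathcal{R}_l$ with $k \neq l$, then $\|P_k v_i\|^2 = \|P_l v_j\|^2 = 1$, forcing $\langle w_i, w_j\rangle = 0$. To obtain $\sum_k Q_k = I$, I would split $I = \sum_i w_i w_i^* = \sum_k A_k$ with $A_k := \sum_{i \in \mathcal{R}_k} w_i w_i^* \geq 0$; since $A_k$ has range $\mathcal{W}_k$ and annihilates $\mathcal{W}_k^\perp$, the mutual orthogonality of the $\mathcal{W}_k$ together with $\sum_k A_k = I$ forces $\bigoplus_k \mathcal{W}_k = \mathbb{C}^n$ and $A_k = Q_k$, so the $Q_k$ are mutually orthogonal projections summing to $I$. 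The four identities then follow by substituting into $\Phi(Y) = \sum_i v_i w_i^* Y w_i v_i^*$: since $Q_k w_i = w_i$ and $P_k v_i = v_i$ when $i \in \mathcal{R}_k$, and $Q_k w_i = 0$, $P_k v_i = 0$ otherwise (the latter by the first claim), each of $\Phi(Q_k X)$, $P_k \Phi(X)$, $\Phi(X) P_k$, and $\Phi(X Q_k)$ collapses to $\sum_{i \in \mathcal{R}_k} v_i w_i^* X w_i v_i^*$. Finally, for $X = Q_k X Q_l$ with $k \neq l$, one gets $\Phi(X) = \Phi\big((Q_k X) Q_l\big) = \Phi(Q_k X) P_l = \Phi(X) P_k P_l = 0$.

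The only step that is not mechanical is the start of the second paragraph: recognizing that the multiplicative-domain hypothesis should be used in the weak form $\Phi^\dagger(P_k)\Phi^\dagger(P_l) = 0$ and then converted, via a trace, into a pointwise positivity statement about the numbers $\|P_k v_i\|$. After that, the subspace decomposition and the intertwining identities are just bookkeeping with the rank-one Kraus form and the orthogonality of the $\mathcal{W}_k$.
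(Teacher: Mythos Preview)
Your argument is correct and follows essentially the same route as the paper: use the multiplicative-domain identity to get $\Phi^\dagger(P_k)\Phi^\dagger(P_l)=0$, take a trace to obtain a vanishing sum of nonnegative terms, and read off both the partition of the $v_i$'s and the orthogonality of the $\mathcal W_k$'s from the term-by-term vanishing. Your organization is marginally cleaner in two places---you extract $\langle w_i,w_j\rangle=0$ directly from the cross terms of the first trace expansion rather than recomputing, and you observe the slightly stronger fact $A_k=\Phi^\dagger(P_k)=Q_k$ along the way---but the underlying argument is the same as the paper's.
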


\begin{proof}
Since each $P_k \in \mathcal M_{\Phi^\dagger}$, for $k\neq l$ we have
$$
\Phi^{\dagger}(P_k)\Phi^{\dagger}(P_l) = \Phi^{\dagger}(P_kP_l) = 0.
$$
Also since $\Phi^\dagger(X) = \sum_i w_i v_i^* X v_i w_i^*$, if we denote $R_k = \Phi^{\dagger}(P_k)$ then we have for $k\neq l$,
$$
0  = \tr(R_k R_l ) = \sum_{i,j} (v_i^*P_kv_i)(v_j^*P_lv_j)|\langle w_i ,w_j \rangle|^2.
$$
So every term in the sum must be zero, and in particular when $i=j$, we have  $\|w_i\|^2 (v_i^*P_k v_i)(v_j^*P_l v_j) = 0$. Hence for each $i$, $v_i^*P_k v_i > 0$ for at most one $k$. As $\|v_i \| =1$ and $\sum_k P_k = I$, we must have exactly one index $1 \leq k \leq r$ such that $P_k v_i = v_i$.

Let $\mathcal V_k = \mathrm{Range}(P_k)$, so that $\C^n = \bigoplus_{i=1}^r V_k$ is an orthogonal direct sum decomposition of $\C^n$. Thus, the projections $P_k$ impose a partition of $\{1,2,\cdots, d\}$ into subsets $\mathcal R_k$ such that $\mathcal V_k = \mathrm{span}\{v_j\}_{j\in \mathcal R_k}$. (Note it may be the case that some $\mathcal R_k = \emptyset$.) Next we show that the same partition also induces an orthogonal direct sum structure on the $\{w_i\}_{i=1}^d$ vectors.

Since $P_k v_j = \chi_{j\in \mathcal R_k}v_j$ where $\chi$ is the indicator function, we may write
$$
R_k := \Phi^{\dagger}(P_k) = \sum_{j\in \mathcal R_k} (v_j^*P_k v_j)w_jw_j^* = \sum_{j\in \mathcal R_k}w_jw_j^*.
$$
As the $R_k$ have mutually orthogonal ranges, for $k \neq l$ we have
$$
0 = \tr (R_k R_l) = \sum_{i \in \mathcal R_k, j \in \mathcal R_l} |\langle w_i ,w_j \rangle|^2,
$$
so each term in the sum is zero, and it follows that $\mathcal W_k$ and $\mathcal W_l$ are orthogonal.

Hence , the projections $\{ Q_k \}$ have mutually orthogonal ranges. Further, the subspace spanned by their (projection) sum $Q = \sum_k Q_k$ must be the identity as $Q$ projects onto $\cup_{k=1}^r \mathcal W_k = \mathrm{span}\{ w_i \}_{i=1}^d = \mathbb{C}^n$; the last equality following from $I = \sum_i w_i w_i^*$.

Finally, for $1\leq k \leq r$ we can compute:
\begin{align*}
\Phi(Q_k X)& = \sum_{i=1}^d (w_i^* Q_k X w_i) v_i v_i^* \\
& = \sum_{i \in \mathcal R_k} (w_i^*X w_i) v_i v_i^* \\
& = \sum_{i = 1}^d (w_i^* X w_i) P_k v_i v_i^*\\
& = P_k \Phi(X),
\end{align*}
and the other equalities are proved in the same way. Specifically, these equalities imply for $k \neq l$:
\[
\Phi(X)   = \Phi(Q_k X Q_l)
 = P_k \Phi(XQ_l)
 = P_k P_l \Phi(X)
 = 0,
\]
and the result follows.
\end{proof}

\begin{remark}\label{offdiagstozero}
From a nullspace perspective, note that in any basis which mutually diagonalizes all $Q_k$ simultaneously, with block matrix structure corresponding to the division of $\C^n$ into direct summands $\mathcal W_k$, any matrix $X$ that is supported entirely on the off-diagonal blocks of the decomposition is annihilated by $\Phi$. In equation form, this says for all $X\in M_n(\mathbb{C})$ that $\Phi(X) = \sum_{j,k} \Phi(Q_j X Q_k) = \sum_k \Phi(Q_k X Q_k) = \sum_k P_k \Phi(X) P_k$.
\end{remark}

The complete depolarizing channel discussed above obviously privatizes the full algebra $M_n(\mathbb{C})$, and in that case $\mathcal M_{\Phi^\dagger} = \mathbb{C}I$, and $P_1 = I = Q_1$. More generally, given that the vectors $v_i, w_i$ which determine the rank-one form of an entanglement breaking channel can be arbitrary, up to the trace preservation condition being satisfied, it is reasonable to expect that generic channels from the class will not privatize any non-trivial algebra. Nevertheless, based on the analysis above, we finish by identifying two special classes of channels that do privatize algebras.

\begin{theorem}
Let $\Phi: M_n(\mathbb{C}) \rightarrow M_n(\mathbb{C})$ be an entanglement breaking channel, with operator-sum form as given in Eq.~(\ref{ebrankone}). Suppose $\mathcal M_{\Phi^\dagger}$ contains a rank-one projection $P = v v^*$. Then $\Phi$ privatizes the algebra $\mathcal A = \mathrm{span}\{ w_i w_j^* : P v_i = v_i, \, Pv_j = v_j \}$ to $P$; that is,
\[
\Phi(A) = \tr(A) P \quad \forall A\in \mathcal A.
\]
\end{theorem}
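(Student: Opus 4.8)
The plan is to reduce everything to Lemma~\ref{vsorthog} applied to the two‑element set $\{P,Q\}$ with $Q:=I-P$. Since the multiplicative domain $\mathcal{M}_{\Phi^{\dagger}}$ is a unital $\ast$‑subalgebra of $M_n(\mathbb{C})$, it contains $I$ and hence $Q$, so $\{P,Q\}$ is a pair of mutually orthogonal projections in $\mathcal{M}_{\Phi^{\dagger}}$ summing to the identity. Lemma~\ref{vsorthog} then produces the partition of $\{1,\dots,d\}$ into $\mathcal{R}_1=\{i: Pv_i=v_i\}$ and $\mathcal{R}_2=\{i: Pv_i=0\}$, together with mutually orthogonal subspaces $\mathcal{W}_1=\mathrm{span}\{w_j: j\in\mathcal{R}_1\}$ and $\mathcal{W}_2=\mathrm{span}\{w_j: j\in\mathcal{R}_2\}$ whose projections $Q_1,Q_2$ sum to $I$. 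I would first record that $\mathcal{A}$ is exactly the subalgebra $Q_1 M_n(\mathbb{C}) Q_1$ of operators supported on $\mathcal{W}_1$: the spanning set $\{w_i: i\in\mathcal{R}_1\}$ of $\mathcal{W}_1$ has rank‑one products $w_iw_j^*$ spanning all of $Q_1 M_n(\mathbb{C}) Q_1$, so in particular every $A\in\mathcal{A}$ satisfies $A=Q_1 A Q_1$.

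Next I would fix $A\in\mathcal{A}$ and expand $\Phi(A)=\sum_{l=1}^d (w_l^* A w_l)\, v_l v_l^*$ using the rank‑one form~(\ref{ebrankone}). For $l\in\mathcal{R}_2$ we have $w_l\in\mathcal{W}_2\perp\mathcal{W}_1$, so $Q_1 w_l=0$ and therefore $w_l^* A w_l = w_l^*(Q_1 A Q_1)w_l = 0$; only the indices $l\in\mathcal{R}_1$ contribute. The one genuinely substantive point is that for such $l$ the condition $Pv_l=v_l$ with $P=vv^*$ and $\|v\|=1$ forces $v_l$ to be a unimodular scalar multiple of $v$ (since $\|v_l\|=1$), hence $v_l v_l^* = vv^* = P$. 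Thus $\Phi(A) = \bigl(\sum_{l\in\mathcal{R}_1} w_l^* A w_l\bigr) P$.

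Finally, to evaluate the scalar I would use trace preservation $\sum_{l=1}^d w_l w_l^* = I$ to write $\sum_{l=1}^d w_l^* A w_l = \sum_{l=1}^d \tr(A w_l w_l^*) = \tr(A)$; combined with the already‑observed vanishing of the $\mathcal{R}_2$‑terms this yields $\sum_{l\in\mathcal{R}_1} w_l^* A w_l = \tr(A)$, and hence $\Phi(A)=\tr(A)\,P$. The proof is largely bookkeeping once Lemma~\ref{vsorthog} is in hand; the essential use of the rank‑one hypothesis on $P$ is that every $v_l v_l^*$ with $l\in\mathcal{R}_1$ collapses to $P$ itself — for a higher‑rank $P$ one could only conclude that $\Phi(A)$ is supported on $\mathrm{Range}(P)$, not that it is a fixed scalar multiple of $P$.
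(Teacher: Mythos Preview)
Your proof is correct and rests on the same ingredients as the paper's---Lemma~\ref{vsorthog} applied to $\{P,I-P\}$ to produce the partition $\mathcal{R}_1\cup\mathcal{R}_2$ and the orthogonal decomposition $\mathcal{W}_1\oplus\mathcal{W}_2$, together with the rank-one hypothesis to force $v_lv_l^*=P$ for $l\in\mathcal{R}_1$. The execution, however, is more direct than the paper's: you compute $\Phi(A)=\sum_l(w_l^*Aw_l)\,v_lv_l^*$ straight from the rank-one Kraus form, kill the $\mathcal{R}_2$-terms using $\mathcal{W}_1\perp\mathcal{W}_2$, collapse the surviving terms using $v_lv_l^*=P$, and read off the scalar via $\sum_l w_lw_l^*=I$. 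The paper instead argues by trace duality, checking $\tr(\Phi(w_iw_j^*)X)=\tr(\tr(w_iw_j^*)PX)$ for arbitrary $X$ through repeated use of the transfer identities $\Phi(Q_kY)=P_k\Phi(Y)=\Phi(Y)P_k=\Phi(YQ_k)$ from Lemma~\ref{vsorthog}. Your route is shorter and makes the role of the rank-one hypothesis completely transparent; the paper's route has the (minor) advantage of illustrating how those transfer identities function in practice. Your side observation that $\mathcal{A}=Q_1M_n(\mathbb{C})Q_1$ is a full corner is also correct and a bit more informative than the paper's remark that $\mathcal{A}$ is a $\ast$-algebra.
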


\begin{proof}
First note that $\mathcal A$ is indeed a $\ast$-algebra, even though it is only defined as linearly closed, as it is a self-adjoint operator space and closed under multiplication.

Given the Kraus operators $\{v_i w_i^*\}_{i=1}^d$ for $\Phi$, suppose we have a (nonempty) subset $\mathcal R_v \subset \{ 1,\ldots , d\}$ and unit vector $v\in \mathbb{C}^n$ such that $v = v_i$ for all $i \in \mathcal R_v$. Then $\mathcal A = \mathrm{span}\{ w_i w_j^* : i,j \in \mathcal R_v\}$.  Put $P = vv^*$ and let $Q$ be the projection onto $\mathrm{span}\{ w_i : i\in \mathcal R_v \}$ as in the proof of Lemma~\ref{vsorthog}.

To complete the proof it is enough to show that $\Phi(w_i w_j^*) = \tr(w_i w_j^*) P$ for any fixed $i,j \in \mathcal R_v$. This follows from multiple applications of Lemma~\ref{vsorthog} in the following calculation, in which we take an arbitrary $X\in M_n(\mathbb{C})$:
\begin{eqnarray*}
\tr(\Phi(w_i w_j^*)X) &=& \tr(\Phi(Q w_i w_j^* Q)X) \\
&=& \tr(P\Phi(w_i w_j^*) P X) \\
&=& (v^*X v) \tr ( P \Phi(w_i w_j^*) P)  \\
&=& (v^*X v) \tr (  \Phi(Q w_i w_j^*Q) ) \\
&=& (v^*X v) \tr ( \Phi(w_i w_j^*) ) \\
&=& (v^*X v) \tr ( w_i w_j^* ) \\
&=& \tr (\tr ( w_i w_j^*) P X),
\end{eqnarray*}
and where the second last equality uses the trace preservation of $\Phi$. As $X$ was arbitrary, the result follows.
\end{proof}

Of course the algebra defined in the theorem could be trivial from a qubit encoding viewpoint -- either $\{0\}$ or having no matrix structure -- but evidently there are many examples of entanglement breaking channels for which a non-trivial algebra and rank-one projection exist and satisfy the conditions of the theorem.

\begin{example}
Consider the physically described single qubit `spontaneous emission' channel \cite{nielsen}, $\Phi : M_2(\mathbb{C}) \rightarrow M_2(\mathbb{C})$ given by $\Phi(\rho) = e_1e_1^*$ for all density operators $\rho$; that is, $\Phi$ privatizes the entire algebra $M_2(\mathbb{C})$ to $P= e_1e_1^*$. Here we have two Kraus operators $A_1 = E_{11} = e_1e_1^*$, $A_2 = E_{12}=e_1e_2^*$, and so $v_1=e_1 = v_2$, $w_1 = e_1$, $w_2 = e_2$ in the Eq.~(\ref{ebrankone}) form of the channel. The dual map satisfies $\Phi^\dagger(X) = (e_1^* X e_1 ) I$ for all $X\in M_2(\mathbb{C})$, and in particular the projection $P$ belongs to $\mathcal M_{\Phi^\dagger}$ (as it is mapped to a projection by $\Phi^\dagger$). Also we see the algebra from the theorem satisfies $\mathcal A = M_2(\mathbb{C})$ in this case.  (In terms of the lemma notation, here $P_1 = e_1e_1^*$, $P_2 = e_2e_2^*$, $Q_1=I$, $Q_2=0$.)

Similarly, higher dimensional versions of the spontaneous emission channel are covered by this result; $\Phi : M_n(\mathbb{C}) \rightarrow M_n(\mathbb{C})$, with $\Phi(\rho) = e_1e_1^*$, and Kraus operators $A_k = e_1e_k^*$ for $1\leq k \leq n$. Here the algebra $\mathcal A$ is the full matrix algebra and it is privatized again to $P = e_1e_1^*\in \mathcal M_{\Phi^\dagger}$.

One can generalize this class of examples further, by considering entanglement breaking channels for which the vectors that determine the Kraus operators $\{ v_i w_i^* \}$ have the property that an index subset $\mathcal R_v$ of the $v_i$ satisfy $v_i = v$ for some fixed vector $v$ and all $i \in \mathcal R_v$. Then the algebra $\mathcal A = \mathrm{span}\{ w_i w_j^* : i,j\in \mathcal R_v\}$, which could have non-trivial structure depending on choice of the $w_i$, would be privatized to $P = vv^*$ by $\Phi$. For instance, if $|\mathcal R_v | = k$ and $\{ w_i : i\in \mathcal R_v \}$ is an orthogonal set of (non-zero) vectors, then $\mathcal A$ is unitarily equivalent to $M_k(\mathbb{C})$ and would satisfy $\Phi(A) = \tr(A)P$ for all $A\in \mathcal A$.
\end{example}

We finish by identifying another class of entanglement breaking channels that privatize special types of matrix algebras.

\begin{theorem}\label{samerankthm}
Let $\Phi: M_n(\mathbb{C}) \rightarrow M_n(\mathbb{C})$ be an entanglement breaking channel. Suppose there are mutually orthogonal projections $P_k$, $1 \leq k \leq r$, inside $\mathcal M_{\Phi^\dagger}$ such that the projections $\Phi^\dagger(P_k)$ all have the same rank. Let $\mathcal{A}$ be any $\ast$-subalgebra of $M_r(\C)$ with constant diagonals. Then $\Phi$ privatizes an algebra $\ast$-isomorphic to $\mathcal{A}$.
\end{theorem}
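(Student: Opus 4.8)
\emph{Proof proposal.} The plan is to use Lemma~\ref{vsorthog} to turn the projections $P_k$ into an orthogonal block decomposition of $\mathbb{C}^n$, observe that the equal-rank hypothesis forces the relevant blocks to share a common dimension $s$, embed $M_r(\mathbb{C})$ (hence $\mathcal{A}$) into $M_n(\mathbb{C})$ as the operators acting within and across those blocks, and then exploit the fact that $\Phi$ annihilates the off-diagonal blocks while sending the diagonal blocks to fixed operators. The ``constant diagonals'' assumption on $\mathcal{A}$ is precisely the condition that makes $\Phi$ collapse the diagonal part of each element of $\iota(\mathcal{A})$ onto a single density operator.

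First I would set up the decomposition. Since $\Phi$ is trace preserving, $\Phi^\dagger$ is unital, so $\mathcal{M}_{\Phi^\dagger}$ is a unital $\ast$-algebra and $P_0 := I - \sum_{k=1}^r P_k$ is a projection in $\mathcal{M}_{\Phi^\dagger}$ orthogonal to every $P_k$. Applying Lemma~\ref{vsorthog} to $\{P_0,P_1,\dots,P_r\}$ (discarding $P_0$ if it is $0$) produces index sets $\mathcal{R}_k$, subspaces $\mathcal{W}_k = \mathrm{span}\{w_j\}_{j\in\mathcal{R}_k}$, and mutually orthogonal projections $Q_k$ onto $\mathcal{W}_k$ summing to $I$, with $\Phi(Q_k X Q_l) = 0$ for all $X$ whenever $k\neq l$. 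The same lemma gives $\Phi^\dagger(P_k) = \sum_{j\in\mathcal{R}_k} w_j w_j^*$, whose range is $\mathcal{W}_k$, so $\mathrm{rank}\,\Phi^\dagger(P_k) = \dim\mathcal{W}_k$; hence the hypothesis that $\Phi^\dagger(P_1),\dots,\Phi^\dagger(P_r)$ all have the same rank says exactly that $s := \dim\mathcal{W}_k$ is independent of $k$ for $1\le k\le r$.

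Next I would build the embedding and run the privatization computation. For each $k\in\{1,\dots,r\}$ fix an orthonormal basis of $\mathcal{W}_k$ and let $U_k : \mathbb{C}^s \to \mathbb{C}^n$ be the corresponding isometry, so $U_k^* U_l = \delta_{kl} I_s$ and $U_k U_k^* = Q_k$. Define $\iota : M_r(\mathbb{C}) \to M_n(\mathbb{C})$ by $\iota\big((b_{kl})\big) = \sum_{k,l=1}^r b_{kl}\, U_k U_l^*$; the relations $U_k^* U_l = \delta_{kl} I_s$ make it routine to check that $\iota$ is an injective $\ast$-homomorphism, so $\iota(\mathcal{A})$ is a $\ast$-subalgebra of $M_n(\mathbb{C})$ that is $\ast$-isomorphic to $\mathcal{A}$. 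For $B = (b_{kl}) \in M_r(\mathbb{C})$, writing $U_k U_l^* = Q_k (U_k U_l^*) Q_l$ and invoking Lemma~\ref{vsorthog} to kill the $k\neq l$ terms gives
\[
\Phi(\iota(B)) \;=\; \sum_{k=1}^r b_{kk}\,\Phi(Q_k).
\]
If $B\in\mathcal{A}$ then all $b_{kk}$ are equal and sum to $\tr(B)$, so each equals $\tr(B)/r$. Put $\rho_0 := \tfrac{1}{rs}\sum_{k=1}^r \Phi(Q_k)$; this is a density operator, being positive with $\tr\rho_0 = \tfrac{1}{rs}\sum_k \tr Q_k = 1$ by trace preservation of $\Phi$ and $\tr Q_k = s$. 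Since $\tr(\iota(B)) = \sum_k b_{kk}\tr Q_k = s\,\tr(B)$, we obtain $\Phi(\iota(B)) = s\,\tr(B)\,\rho_0 = \tr(\iota(B))\,\rho_0$ for all $B\in\mathcal{A}$, so $\Phi$ privatizes $\iota(\mathcal{A})\cong\mathcal{A}$ to $\rho_0$.

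The step I expect to take the most care is the interface with Lemma~\ref{vsorthog}: correctly converting the equal-rank hypothesis on the $\Phi^\dagger(P_k)$ into the statement that the $\mathcal{W}_k$ share a common dimension $s$ (this common value is what lets $\iota$ be assembled from genuine isometries $\mathbb{C}^s\to\mathbb{C}^n$ and what makes the factor $s$ cancel uniformly in the last identity), together with handling the possibly nonzero complementary projection $P_0$ so the lemma applies as stated. The remaining ingredients (the $\ast$-homomorphism property of $\iota$, the vanishing of the off-diagonal terms, and the trace bookkeeping) are routine.
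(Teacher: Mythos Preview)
Your proof is correct and follows essentially the same route as the paper: build partial isometries between the equal-rank blocks $\mathcal{W}_k$ to embed $M_r(\mathbb{C})$ into $M_n(\mathbb{C})$, use Lemma~\ref{vsorthog} to kill the off-diagonal blocks so that $\Phi(\iota(B)) = \sum_k b_{kk}\Phi(Q_k)$, and then invoke the constant-diagonal hypothesis. If anything you are slightly more careful than the paper---you explicitly adjoin $P_0$ so the lemma applies as stated, construct the partial isometries concretely as $U_kU_l^*$, and verify that $\rho_0$ is a density operator---whereas the paper records the output as $\frac{\tr(A)}{r}P\Phi(I)$ with $P=\sum_k P_k$ (which is your $rs\,\rho_0$) and leaves those checks implicit.
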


\begin{proof}
Note first that the operators $Q_k = \Phi^\dagger(P_k)$ are indeed projections as each $P_k\in \mathcal M_{\Phi^\dagger}$. Let $s = \mathrm{rank}(Q_k)$ and for each $k$ put $\mathcal W_k = Q_k \mathbb{C}^n$. For each pair $1 \leq k,l \leq r$, let $V_{kl}$ be a partial isometry on $\mathbb{C}^n$ with initial projection $V_{kl}^* V_{kl} = Q_l$ and final projection $V_{kl}V_{kl}^* = Q_k$. The $V_{ij}$ set up a $r\times r$ block matrix picture for operators with domain and range supported on the range of the projection $Q = \sum_{k=1}^r Q_k$.

We can then define a $\ast$-isomorphism $\Psi : M_r(\mathbb{C}) \rightarrow M_n(\mathbb{C})$ by $\Psi(A) = \sum_{k,l = 1}^r a_{kl} V_{kl}$ for matrices $A=(a_{kl})$. Note the image of $\Psi$ inside $M_n(\mathbb{C})$ is unitarily equivalent to $M_r(\mathbb{C}) \otimes I_s$. Moreover, as $\Psi(A) = \sum_{k,l} a_{kl} Q_k V_{kl} Q_l$, by Lemma~\ref{vsorthog} we have
\[
\Phi(\Psi(A)) = \sum_{k=1}^r \Phi(Q_k \Psi(A) Q_k)
= \sum_{k=1}^r a_{kk} \Phi(Q_k)
\]
and $\Phi(Q_k) = P_k \Phi(I) = \Phi(I) P_k$, so that
\[
\Phi(\Psi(A)) = \big( \sum_{k=1}^r a_{kk} P_k \big) \Phi(I) = \Phi(I) \big( \sum_{k=1}^r a_{kk} P_k \big).
\]

Now consider a $\ast$-subalgebra $\mathcal A$ of $M_r(\C)$ with constant diagonals, $a_{kk} = \frac{\tr(A)}{r}$ for all $1\leq k \leq r$ and $A = (a_{kl})\in\mathcal A$. Let $P = \sum_k P_k$. Then from the above calculation we have for all $A\in \mathcal A$,
\[
\Phi(\Psi(A)) = \sum_{k=1}^r \frac{1}{r}\tr(A) P_k \Phi(I)  = \frac{\tr(A)}{r} (P\Phi(I)) = \frac{\tr(A)}{r} (\Phi(I) P).
\]
Hence $\Phi$ privatizes the algebra $\Psi(\mathcal A)$ and the result follows.
\end{proof}

\begin{remark}
Algebras satisfying the condition of having all diagonal entries the same are plentiful and may be generated by taking any partition of the integer $r = \sum_{k=1}^p m_k$, and factoring each part in the partition $m_k = i_kj_k$; then there is a unitary $U$ such that, after conjugation by $U$, the algebra $\oplus_{k=1}^p I_{\max{i_k,j_k}}\otimes M_{\min{i_k,j_k}}(\C)$ will have this form. In particular, for any expression of $r$ as a sum of squares, $r = \sum_{i=1}^p i_k^2$, we find that there is a unitary so that $U\bigl(\oplus_{k=1}^pI_{i_k}\otimes M_{i_k}(\C)\bigr)U^*$ has the property we seek.
\end{remark}

\begin{example}
For an explicit example of a subclass of entanglement breaking channels that satisfy the conditions of the theorem, take a positive integer $n$ with factors $n = rs$.

For each $1 \leq k \leq r$, choose an orthonormal set of vectors $\{ w_{i,k} : 1 \leq i \leq s \} \subseteq \mathbb{C}^n$ and let $Q_k$ be the projection onto the ($s$-dimensional) subspace they span. Then, $I = \sum_k Q_k$ and $Q_k = \sum_{i=1}^s w_i w_i^*$. Next, for each $1\leq k \leq r$ choose sets of non-zero (but not necessarily orthogonal) vectors $\{v_{i,k} : 1 \leq i \leq s\}\subseteq \mathbb{C}^n$, subject to the constraint $v_{i,k}^* v_{j,l} = 0$ whenever $k\neq l$. Let $P_k$ be the projection onto the subspace spanned by $\{v_{i,k}\}_i$.

Let $\Phi$ be the entanglement breaking channel with Kraus operators
\[
\{ v_{i,k}w_{i,k}^* : 1\leq i \leq s, \, 1\leq k \leq r\}.
\]
Then $P_1, \ldots P_r$ are a family of mutually orthogonal projections that belong to the multiplicative domain $\mathcal M_{\Phi}^\dagger$, as they are mapped to projections, in fact $\Phi^\dagger (P_k) = Q_k$. Also recall each $Q_k$ is rank-$s$. Thus, Theorem~\ref{samerankthm} applies, and $\Phi$ privatizes a family of algebras that are isomorphic to subalgebras of $M_r(\mathbb{C})$ with constant diagonals, as per the construction of the proof.
\end{example}

\vspace{0.1in}

{\noindent}{\it Acknowledgements.} D.W.K. was partly supported by NSERC and a University Research Chair at Guelph. R.P. was partly supported by NSERC. M.R is partially supported by Research Initiation Grant at the BITS-Pilani Goa Campus.

\bibliographystyle{plain}
\bibliography{KLOPRbib}

\end{document}